\newtheorem{theorem}{Theorem}
\newtheorem{example}[theorem]{Example}
\providecommand{\keywords}[1]{\textbf{\textit{Keywords}} #1}
\begin{document}
	
	\title{{\tt FPS} In Action: An Easy Way To Find Explicit Formulas For Interlaced Hypergeometric Sequences}
	
	\author{Bertrand Teguia Tabuguia\\
		Nonlinear Algebra Group\\
		Max Planck Institute for Mathematics in the Sciences\\
		04103 Leipzig, Germany\\
		{\tt bertrand.teguia@mis.mpg.de}
		
		\and
		
		Wolfram Koepf\\
		Mathematics and Natural Sciences\\
		University of Kassel\\
		Kassel, Germany, 34132\\
		{\tt koepf@mathematik.uni-kassel.de}}
	
	\date{}
	
	\maketitle
	
	\begin{abstract}
		Linear recurrence equations with constant coefficients define the power series coefficients of rational functions. However, one usually prefers to have an explicit formula for the sequence of coefficients, provided that such a formula is ``simple'' enough. Simplicity is related to the compactness of the formula due to the presence of algebraic numbers: ``the smaller, the simpler''. This poster showcases the capacity of recent updates on the Formal Power Series (FPS) algorithm, implemented in Maxima and Maple (\texttt{convert/FormalPowerSeries}), to find simple formulas for sequences like those from \url{https://oeis.org/A307717}, \url{https://oeis.org/A226782}, or \url{https://oeis.org/A226784} by computing power series representations of their correctly guessed generating functions. We designed the algorithm for the more general context of univariate $P$-recursive sequences. Our implementations are available at \url{http://www.mathematik.uni-kassel.de/~bteguia/FPS_webpage/FPS.htm}.
	\end{abstract}
	
	\keywords{Hypergeometric type power series, $D$-finite function, $m$-fold hypergeometric term, $P$-recursive sequence, Guessing, Explicit formula}
	
	\maketitle
	
	\section{Introduction}
	
	Let $\mathbb{K}$ be a field of characteristic zero, usually a finite extension field of the rationals. We call power series representation for a function $f(x)=\sum_{n=0}^{\infty}a_nx^n, a_n\in\mathbb{K}$, a formula for the $(n+1)^{\text{st}}$ summand $a_n x^n$ that one can use to compute any truncation of the power series of $f$. The latter coincide with the Taylor expansion of $f$ when $f$ is analytic at the origin, and thus the representation could also be defined at any point $x_0\in\mathbb{K}$ where $f$ is analytic. In the univariate case, a function $f$ is called holonomic or $D$-finite, if it satisfies a linear differential equation with polynomial coefficients (holonomic DE). From \cite{Koepf1992}, we have a general strategy to search for power series representations symbolically. Given a holonomic function $f(x)=\sum_{n=0}^{\infty}a_nx^n$, as an expression in the variable $x$,
	\begin{enumerate}
		\item compute a holonomic DE satisfied by $f(x)$;
		\item convert the holonomic DE into a linear recurrence equation with polynomial coefficients (holonomic RE);
		\item \textit{solve} the holonomic RE for the coefficients $a_n$.
	\end{enumerate}
	The two last steps are equivalent to finding power series solutions of linear ordinary differential equations. One cannot always solve the resulting recurrence equation. 
	
	For $m\in\mathbb{N}$, a term $a_n$ is called $m$-fold hypergeometric if the ratio $a_{n+m}/a_n\in\mathbb{K}(n)$. When $m$ is not specified, $m$-fold hypergeometric denotes all such terms for arbitrary positive integers $m$. We say that a function is of hypergeometric type if its power series coefficients are evaluations of $m$-fold hypergeometric terms. The definition extends to Laurent-Puiseux series, but in this paper we only consider power series with non-negative integer exponents. Numerous holonomic functions, including rational functions, are of hypergeometric type. The computation of $m$-fold hypergeometric term solutions of holonomic REs is effective (see \cite{ryabenko2002formal,BTphd,teguia2022symbolic}). On this poster, we use the algorithm \texttt{mfoldHyper} (available in Maple 2022 as \texttt{LREtools-mhypergeomsols}), developed in \cite{teguia2022symbolic} for the efficient computation of hypergeometric type power series (see also \cite{teguia2020power,tabuguia2021hypergeometric}). It is worth mentioning that \texttt{mfoldHyper} extends the algorithms by Petkov\v{s}ek and Mark van Hoeij (see \cite{petkovvsek1992hypergeometric,van1999finite,cluzeau2006computing}) and has a much better performance than some previous approaches in the same direction (see \cite{petkovvsek1993finding,horn2012m}). Our Formal Power Series (FPS) algorithm uses \texttt{mfoldHyper} to compute a basis for all $m$-fold hypergeometric term solutions on the third step given above, and uses linear algebra with a truncated series expansion of $f$ to deduce a power series representation for $f$. For more details about the algorithm, we refer the reader to \cite{teguia2022symbolic,BTphd}.
	\begin{example} The FPS algorithm computes the following power series representation for $\frac{1}{\left((x^2-p)(x^3-q)\right)}$ for arbitrary constants $p$ and $q$.
		\begin{dmath}\label{eq1}
			\moverset{\infty}{\munderset{n =0}{\textcolor{gray}{\sum}}}\! \left(-\frac{\left(q \,p^{-1-\frac{n}{2}}-q^{-\frac{1}{3}-\frac{n}{3}} p \right) x^{n}}{p^{3}-q^{2}}\right)+\moverset{\infty}{\munderset{n =0}{\textcolor{gray}{\sum}}}\! \left(-\frac{\left(p^{\frac{3}{2}}-q \right) p^{-n -\frac{3}{2}} x^{2 n +1}}{p^{3}-q^{2}}\right)+\moverset{\infty}{\munderset{n =0}{\textcolor{gray}{\sum}}}\! \left(-\frac{q^{-1-n} p \left(q^{\frac{2}{3}}-p \right) x^{3 n}}{p^{3}-q^{2}}\right)+\left(\moverset{\infty}{\munderset{n =0}{\textcolor{gray}{\sum}}}\! \frac{\left(q^{\frac{2}{3}}-p \right) q^{-n -\frac{2}{3}} x^{3 n +1}}{p^{3}-q^{2}}\right).
		\end{dmath}
		For software reason, the formula is not identical with the one obtained with Maple 2022, although correct; however, the output is still much more compact compared to previous Maple versions. Our implementation is available for Maple and Maxima users at \url{http://www.mathematik.uni-kassel.de/~bteguia/FPS_webpage/FPS.htm}. The authors welcome any comments for the improvement of the package.
	\end{example}
	
	The latter example presents a typical situation of what happens when it comes to computing explicit formulas for power series coefficients of rational functions that generate some sequences from N. J. A. Sloane \url{https://oeis.org}. FPS splits the formula modulo some integers, which allows to also deal with the situation of many zeros (see \cite{kauers2019you}) in the sequence with no specific care about them. The Padé approximation is often the best choice to guess the rational function that generates a sequence \cite{geddes1979symbolic}. In our next examples, we use the implementation in the \texttt{Gfun} package (\texttt{ratpoly}) (\cite{gfun}) to find the generating functions, and FPS to compute the desired formulas.
	
	\section{Some Explicit Formulas}
	
	In what follows, we compute power series representations of rational functions generating some sequences from \url{https://oeis.org}. The resulting representations give explicit formulas for the $(n+1)^{\text{st}}$ terms (we start sequences at $0$) of the corresponding sequences. We do not give detailed proofs that the guessed rational function is the correct one; however, the formula obtained using FPS is correct by the correctness of the FPS algorithm. Throughout this section $a_n$ will denote the $(n+1)^{\text{st}}$ term of the sequence for each example.
	
	\begin{example} Let us consider the sequence \textbf{A307717} from \url{https://oeis.org/A307717}. $a_n$ counts the number of palindromic squares, $k^2$, of length $n+1$ such that $k$ is also palindromic.
		
		\begin{Maple Normal}
			We use the 33 first terms of the sequence. That is the minimal required by the guess in the next step.
		\end{Maple Normal}
	
\begin{lstlisting}
 > L:=[4, 0, 2, 0, 5, 0, 3, 0, 8, 0, 5, 0, 13, 0, 9, 0, 22, 0, 
 16, 0, 37, 0, 27, 0, 60, 0, 43, 0, 93, 0, 65, 0, 138]:
\end{lstlisting}
	
		\begin{Maple Normal}
			We guess the generating function.
		\end{Maple Normal}
	
		\begin{lstlisting}
			> f:=gfun:-listtoratpoly(L,x)[1]
		\end{lstlisting}
	
		\begin{dmath}\label{(2)}
			f \coloneqq -\frac{2 x^{16}-x^{14}-5 x^{12}+5 x^{10}+12 x^{8}-5 x^{6}-11 x^{4}+2 x^{2}+4}{-x^{16}+4 x^{12}-6 x^{8}+4 x^{4}-1}
		\end{dmath}
	
		\begin{Maple Normal}
			We compute the power series representation using FPS. The input is the expression $f$, its variable $x$, and a summation variable $n$ chosen by the user.
		\end{Maple Normal}
	
		\begin{lstlisting}
			> FPS(f,x,n)
		\end{lstlisting}
	
		\begin{dmath}\label{eq2}
			2+\left(\moverset{\infty}{\munderset{n =0}{\textcolor{gray}{\sum}}}\! \left(\frac{\left(-1\right)^{n} n^{3}}{96}-\frac{3 \left(-1\right)^{n} n^{2}}{32}+\frac{65 \left(-1\right)^{n} n}{96}+\frac{7 \left(-1\right)^{n}}{32}+\frac{n^{3}}{32}-\frac{5 n^{2}}{32}+\frac{37 n}{32}+\frac{57}{32}\right) x^{2 n}\right)
		\end{dmath}
	\end{example}
	
	\begin{theorem} The sequence \textbf{A307717} from \url{https://oeis.org/A307717} has the explicit formula
		\begin{align}\label{eq3}
			&a_0 = 4 \nonumber\\
			&a_{2n+1} = 0, n\geq 0\\
			&a_{2n} = \frac{\left(\left(-1\right)^{n} +3\right) n^{3}-\left(9 \left(-1\right)^{n} + 15\right) n^{2}+\left(65 \left(-1\right)^{n} +111 \right) n +21 \left(-1\right)^{n} +171}{96}, n\geq 1, 
		\end{align}
		where $a_n$ is its $(n+1)^{\text{st}}$ term.
	\end{theorem}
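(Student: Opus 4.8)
The statement really has two layers, and the plan is to keep them apart. The outer layer asserts that the ordinary generating function of \textbf{A307717} equals the rational function $f$ guessed in the preceding display; the inner layer asserts that the FPS representation \eqref{eq2} of $f$ rewrites as the parity-split closed form \eqref{eq3}. Following the convention adopted throughout this section, I would take the outer layer as a hypothesis --- justified by \texttt{gfun:-listtoratpoly} on the $33$ initial terms together with the correctness of FPS --- and spend the proof on the purely algebraic inner layer, namely extracting and simplifying the coefficients of \eqref{eq2}.

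First I would record the support of the series in \eqref{eq2}: every summand carries the factor $x^{2n}$, and the standalone $2$ is the $x^0$ term, so no odd power of $x$ occurs. Hence the coefficient of $x^{2n+1}$ in $f$ is $0$ for every $n\ge 0$, which is exactly the line $a_{2n+1}=0$. For the even part, write $b_n$ for the polynomial coefficient of $x^{2n}$ appearing inside the sum,
\[
b_n=\frac{(-1)^n n^3}{96}-\frac{3(-1)^n n^2}{32}+\frac{65(-1)^n n}{96}+\frac{7(-1)^n}{32}+\frac{n^3}{32}-\frac{5n^2}{32}+\frac{37n}{32}+\frac{57}{32}.
\]
For $n\ge1$ the coefficient of $x^{2n}$ in $f$ is precisely $b_n$, so I would put $b_n$ over the common denominator $96$ and group the terms carrying $(-1)^n$ against those that do not. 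This collapses $b_n$ to $\tfrac{1}{96}\bigl[((-1)^n+3)n^3-(9(-1)^n+15)n^2+(65(-1)^n+111)n+(21(-1)^n+171)\bigr]$, which is the asserted formula for $a_{2n}$.

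The remaining case $n=0$ must be argued separately, precisely because the constant $2$ sits outside the summation. Evaluating $b_0=\tfrac{7}{32}+\tfrac{57}{32}=2$ and adding the external $2$ gives the constant coefficient $a_0=2+b_0=4$, matching the first line of the theorem. As a consistency check I would note that the closed form for $a_{2n}$ is claimed only for $n\ge1$: substituting $n=0$ into it returns $\tfrac{21+171}{96}=2$, which equals $b_0$ but not $a_0$, confirming that the external constant is exactly the discrepancy to be absorbed by the separate $n=0$ statement.

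The real obstacle is the outer layer, not the inner one. Nothing in the coefficient extraction proves that the guessed $f$ is the true generating function of the palindromic-square count; establishing that rigorously would require either a number-theoretic argument showing the sequence is C-finite with a recurrence of order at most that of the denominator of $f$ (so that agreement on the $33$ guessing terms forces equality), or an independent derivation of $f$. Within the scope of this poster that step is granted via the guess, and once it is granted the theorem follows from the routine reading-off and fraction-combining above, whose only delicate point is the bookkeeping of the $n=0$ term against the standalone constant $2$.
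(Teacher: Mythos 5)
Your proposal is upside-down relative to what this theorem actually requires, and the inversion leaves a genuine gap. The algebra you carry out --- reading off that only even powers occur, combining the eight terms of the FPS coefficient over the denominator $96$, and reconciling the standalone constant $2$ with the $n=0$ evaluation $b_0=2$ so that $a_0=4$ --- is correct, but it is the part of the argument the paper disposes of in one clause: the section's stated convention is that ``the formula obtained using FPS is correct by the correctness of the FPS algorithm,'' so the passage from the guessed rational function $f$ to the parity-split closed form \eqref{eq3} needs no proof at all. What does need proof, and what you explicitly demote to a hypothesis (``the outer layer \ldots is granted via the guess''), is that the guessed $f$ in \eqref{(2)} really is the generating function of \textbf{A307717}. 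The theorem is an unconditional statement about that sequence; a proof conditional on the guess proves nothing about it.

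The paper closes exactly this gap, and cheaply: it does not attempt the C-finiteness/recurrence-order argument you sketch, but instead takes the second of the two routes you name --- an independent derivation of $f$. The sequence \textbf{A307717} is a zero-interlaced shift of \textbf{A218035} (\url{https://oeis.org/A218035}), whose generating function $g(x)$ is already known; one then simply verifies the identity $f(x)=g(x^2)/x^2$, which encodes precisely the interlacing with zeros and the index shift. That one-line verification is the entire content of the paper's proof. So the comparison is: you proved the half that was free and assumed the half that was the point; to repair your proposal, replace the ``granted via the guess'' paragraph with the check $f(x)=g(x^2)/x^2$ against the recorded generating function of \textbf{A218035}, after which your coefficient bookkeeping (or, more simply, an appeal to FPS correctness) finishes the theorem.
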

	\begin{proof}It is enough to prove that the generating function $(\ref{(2)})$ is the correct one. From the connection to the sequence from \url{https://oeis.org/A218035} whose generating function, denote it by $g(x)$, is known, one verifies that $f$ in $(\ref{(2)})$ and $g$ are linked by the relation $f(x)=g(x^2)/x^2$, which holds. 
	\end{proof}

	We noticed that the latter example is also investigated in the recent paper \cite{kauers2022guessing} about guessing. There the authors also provide a different formula for \textbf{A307717}.
	
	\begin{example} Our second sequence is \textbf{A226782} from \url{https://oeis.org/A226782}. $a_n=0$ if $n+1$ is even, and the inverse of $4$ in the ring $\mathbb{Z}/(n+1)\mathbb{Z}^*$ if $n+1$ is odd. We proceed as before and find an explicit formula for $a_n$.
		
\begin{lstlisting}
	> L:=[0, 0, 1, 0, 4, 0, 2, 0, 7, 0, 3, 0, 10, 0, 4, 0, 13]:
\end{lstlisting}
\begin{lstlisting}
	> f:=gfun:-listtoratpoly(L,x)[1]
\end{lstlisting}
\begin{dmath}\label{(8)}
	f \coloneqq -\frac{-x^{8}+4 x^{4}+x^{2}}{-x^{8}+2 x^{4}-1}
\end{dmath}
\begin{lstlisting}
	> FPS(f,x,n)
\end{lstlisting}
\begin{dmath}\label{(9)}
	-1+\left(\moverset{\infty}{\munderset{n =0}{\textcolor{gray}{\sum}}}\! \left(\frac{\left(-1\right)^{n} n}{2}+\frac{\left(-1\right)^{n}}{4}+n +\frac{3}{4}\right) x^{2 n}\right)
\end{dmath}
In this example, the representation can be rewritten without the extra $-1$ since the series part evaluates to $1$ at $n=0$.
	\end{example}
	
	\begin{theorem} The sequence \textbf{A226782} from \url{https://oeis.org/A226782} has the explicit formula 
		\begin{align}
			&a_0 = 0 \nonumber\\
			&a_{2n+1} = 0, n\geq 0 \\
			&a_{2n} = \frac{\left(-1\right)^{n} n}{2}+\frac{\left(-1\right)^{n}}{4}+n +\frac{3}{4}, n\geq 1, \label{eq4}
		\end{align}
		where $a_n$ is its $(n+1)^{\text{st}}$ term.
	\end{theorem}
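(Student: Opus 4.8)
The plan is to follow the template of the preceding theorem: by the stated correctness of the FPS algorithm, it would suffice to confirm that the rational function $f$ in $(\ref{(8)})$ really is the generating function of \textbf{A226782}, after which $(\ref{(9)})$, and hence the closed form $(\ref{eq4})$, follows automatically. However, deriving $f$ from the number-theoretic definition essentially forces one to know $a_{2n}$ in closed form first (one would sum $\sum_{n} a_{2n}x^{2n}$ or verify the associated constant-coefficient recurrence), so the number theory is the real content either way. I would therefore prove $(\ref{eq4})$ directly against the definition of the sequence and treat agreement with $f$ as a consistency check rather than the logical core.

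First I would dispose of the easy parts. Since every monomial in $f$ carries an even power of $x$, the function is even and all odd-indexed coefficients vanish, giving $a_{2n+1}=0$. The value $a_0=0$ is the exceptional case $n+1=1$, where $\mathbb{Z}/1\mathbb{Z}$ is the zero ring and the ``inverse of $4$'' is $0$; this is genuinely an exception, since the bracketed expression in $(\ref{(9)})$ evaluates to $1$ at $n=0$, which is exactly why FPS emits the corrective $-1$. For $n\ge 1$ the modulus $m=2n+1\ge 3$ is odd, so $\gcd(4,m)=1$ and $4$ has a unique inverse with canonical representative in $\{0,\dots,m-1\}$; the content of the theorem is that this representative equals the right-hand side of $(\ref{eq4})$.

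The key step is to evaluate that representative by splitting on the parity of $n$. Collecting terms, the right-hand side of $(\ref{eq4})$ equals $n\bigl(\tfrac{(-1)^n}{2}+1\bigr)+\tfrac{(-1)^n+3}{4}$, which simplifies to $\tfrac{3n+2}{2}$ when $n$ is even and to $\tfrac{n+1}{2}$ when $n$ is odd; in both cases the parity assumption makes the value an integer. I would then check directly that it is the sought inverse: for $n$ even, $4\cdot\tfrac{3n+2}{2}=6n+4=3(2n+1)+1\equiv 1\pmod{m}$, and for $n$ odd, $4\cdot\tfrac{n+1}{2}=2n+2=(2n+1)+1\equiv 1\pmod{m}$. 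Finally I would confirm the range bound $0\le\text{value}\le 2n$, which reduces to $n\ge 2$ in the even case and $n\ge 1$ in the odd case, so that the value is the least non-negative representative rather than some shifted congruent integer.

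I expect the main obstacle to be conceptual rather than computational: the bookkeeping around the exceptional index $n=0$, and more fundamentally the logical gap the paper openly leaves, namely that the guessed $f$ is only verified numerically on finitely many terms. The direct arithmetic argument above closes that gap without appealing to the guess, so I would present it as the primary proof. To reconnect with the FPS framework I would then remark that the denominator of $f$ factors as $-(x^4-1)^2=-(x^2-1)^2(x^2+1)^2$, whose double poles at $x^2=\pm 1$ predict precisely coefficients that are linear in $n$ on each parity class, modulated by $(-1)^n$, matching the shape of $(\ref{eq4})$; substituting the proven closed form into $\sum_{n\ge 0}a_{2n}x^{2n}$ and summing the two resulting series recovers exactly $f$ in $(\ref{(8)})$, which both validates the guessing step and justifies the appeal to FPS correctness.
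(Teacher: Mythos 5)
Your argument is correct, but it takes a genuinely different route from the paper's. The paper's entire proof is a reduction to the correctness of the guessed generating function: it observes that $f$ in $(\ref{(8)})$ agrees, up to a shift, with the generating function recorded in the OEIS entry for \textbf{A226782}, and then lets the correctness of the FPS algorithm carry $(\ref{(9)})$ and hence $(\ref{eq4})$. You invert this logic: you prove $(\ref{eq4})$ directly from the number-theoretic definition --- splitting on the parity of $n$, the right-hand side equals $\tfrac{3n+2}{2}$ for even $n$ and $\tfrac{n+1}{2}$ for odd $n$, and the identities $4\cdot\tfrac{3n+2}{2}=3(2n+1)+1$ and $4\cdot\tfrac{n+1}{2}=(2n+1)+1$, together with the range check $0\le a_{2n}\le 2n$, identify these values as the canonical inverse of $4$ modulo $2n+1$ --- and only afterwards resum the series to confirm $f$. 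Your proof is self-contained and in a sense stronger: the paper's argument inherits whatever epistemic status the OEIS generating function has and verifies the guess only through that external source, whereas your resummation independently establishes the generating function (and indeed $\sum_{n\ge 1}a_{2n}x^{2n}$ does sum to $f$, since the denominator of $f$ is $-(x^4-1)^2$). What you give up is exactly the point the poster is making: the paper's template needs no sequence-specific insight beyond matching a known generating function, while your parity-split arithmetic is bespoke to this example. One cosmetic inconsistency to fix: you derive $a_{2n+1}=0$ from the evenness of $f$, which is circular given your stated aim of not assuming the guess; but this case is immediate from the definition itself, since the index $2n+1$ corresponds to the even modulus $2n+2$, so no appeal to $f$ is needed.
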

	\begin{proof}
		One verifies that the generating function given in \url{https://oeis.org/A226782} is a shift of $f$ in $(\ref{(8)})$, as we start the sequence at $0$ instead of $1$. 
	\end{proof}
	
	\section{Conclusion}
	This poster aims to recommend our FPS implementation to the community of computer algebraists or scientists interested in finding explicit formulas for sequences. Combining it with guessing strategies enables one to discover or recover explicit formulas. Of course, limitations may always exist, at least in practice, when implementing the theory; however, as research goes on, FPS also keeps improving to produce explicit formulas for larger classes of functions in the future. We mention that previous versions of the \texttt{convert/FormalPowerSeries} package could not compute the representations presented here. However, this is now possible with Maple 2022, which incorporates our implementation thanks to Jürgen Gerhard from Maplesoft.


\begin{thebibliography}{15}
	
	
	\ifx \showCODEN    \undefined \def \showCODEN     #1{\unskip}     \fi
	\ifx \showDOI      \undefined \def \showDOI       #1{#1}\fi
	\ifx \showISBNx    \undefined \def \showISBNx     #1{\unskip}     \fi
	\ifx \showISBNxiii \undefined \def \showISBNxiii  #1{\unskip}     \fi
	\ifx \showISSN     \undefined \def \showISSN      #1{\unskip}     \fi
	\ifx \showLCCN     \undefined \def \showLCCN      #1{\unskip}     \fi
	\ifx \shownote     \undefined \def \shownote      #1{#1}          \fi
	\ifx \showarticletitle \undefined \def \showarticletitle #1{#1}   \fi
	\ifx \showURL      \undefined \def \showURL       {\relax}        \fi
	\providecommand\bibfield[2]{#2}
	\providecommand\bibinfo[2]{#2}
	\providecommand\natexlab[1]{#1}
	\providecommand\showeprint[2][]{arXiv:#2}
	
	\bibitem[1]%
	{cluzeau2006computing}
	\bibfield{author}{\bibinfo{person}{Thomas Cluzeau} {and} \bibinfo{person}{Mark
			van Hoeij}.} \bibinfo{year}{2006}\natexlab{}.
	\newblock \showarticletitle{Computing hypergeometric solutions of linear
		recurrence equations}.
	\newblock \bibinfo{journal}{\emph{Appl. Algebra Engrg. Comm. Comput.}}
	\bibinfo{volume}{17}, \bibinfo{number}{2} (\bibinfo{year}{2006}),
	\bibinfo{pages}{83--115}.
	\newblock
	
	
	\bibitem[2]%
	{geddes1979symbolic}
	\bibfield{author}{\bibinfo{person}{Keith~O. Geddes}.}
	\bibinfo{year}{1979}\natexlab{}.
	\newblock \showarticletitle{Symbolic computation of Pad{\'e} approximants}.
	\newblock \bibinfo{journal}{\emph{ACM Transactions on Mathematical Software
			(TOMS)}} \bibinfo{volume}{5}, \bibinfo{number}{2} (\bibinfo{year}{1979}),
	\bibinfo{pages}{218--233}.
	\newblock
	
	
	\bibitem[3]%
	{horn2012m}
	\bibfield{author}{\bibinfo{person}{Peter Horn}, \bibinfo{person}{Wolfram
			Koepf}, {and} \bibinfo{person}{Torsten Sprenger}.}
	\bibinfo{year}{2012}\natexlab{}.
	\newblock \showarticletitle{m-Fold hypergeometric solutions of linear
		recurrence equations revisited}.
	\newblock \bibinfo{journal}{\emph{Mathematics in Computer Science}}
	\bibinfo{volume}{6}, \bibinfo{number}{1} (\bibinfo{year}{2012}),
	\bibinfo{pages}{61--77}.
	\newblock
	
	
	\bibitem[4]%
	{kauers2022guessing}
	\bibfield{author}{\bibinfo{person}{Manuel Kauers} {and}
		\bibinfo{person}{Christoph Koutschan}.} \bibinfo{year}{2022}\natexlab{}.
	\newblock \showarticletitle{Guessing with Little Data}.
	\newblock \bibinfo{journal}{\emph{arXiv preprint arXiv:2202.07966}}
	(\bibinfo{year}{2022}).
	\newblock
	
	
	\bibitem[5]%
	{kauers2019you}
	\bibfield{author}{\bibinfo{person}{Manuel Kauers} {and}
		\bibinfo{person}{Thibaut Verron}.} \bibinfo{year}{2019}\natexlab{}.
	\newblock \showarticletitle{Why you should remove zeros from data before
		guessing}.
	\newblock \bibinfo{journal}{\emph{ACM Communications in Computer Algebra}}
	\bibinfo{volume}{53}, \bibinfo{number}{3} (\bibinfo{year}{2019}),
	\bibinfo{pages}{126--129}.
	\newblock
	
	
	\bibitem[6]%
	{Koepf1992}
	\bibfield{author}{\bibinfo{person}{Wolfram Koepf}.}
	\bibinfo{year}{1992}\natexlab{}.
	\newblock \showarticletitle{Power series in computer algebra}.
	\newblock \bibinfo{journal}{\emph{J. Symbolic Comput.}} \bibinfo{volume}{13},
	\bibinfo{number}{6} (\bibinfo{year}{1992}), \bibinfo{pages}{581--603}.
	\newblock
	
	
	\bibitem[7]%
	{petkovvsek1992hypergeometric}
	\bibfield{author}{\bibinfo{person}{Marko Petkov{\v{s}}ek}.}
	\bibinfo{year}{1992}\natexlab{}.
	\newblock \showarticletitle{Hypergeometric solutions of linear recurrences with
		polynomial coefficients}.
	\newblock \bibinfo{journal}{\emph{J. Symbolic Comput.}} \bibinfo{volume}{14},
	\bibinfo{number}{2-3} (\bibinfo{year}{1992}), \bibinfo{pages}{243--264}.
	\newblock
	
	
	\bibitem[8]%
	{petkovvsek1993finding}
	\bibfield{author}{\bibinfo{person}{Marko Petkov{\v{s}}ek} {and}
		\bibinfo{person}{Bruno Salvy}.} \bibinfo{year}{1993}\natexlab{}.
	\newblock \showarticletitle{Finding all hypergeometric solutions of linear
		differential equations}. In \bibinfo{booktitle}{\emph{Proceedings of the 1993
			international symposium on Symbolic and algebraic computation}}.
	\bibinfo{pages}{27--33}.
	\newblock
	
	
	\bibitem[9]%
	{ryabenko2002formal}
	\bibfield{author}{\bibinfo{person}{Anna~A Ryabenko}.}
	\bibinfo{year}{2002}\natexlab{}.
	\newblock \showarticletitle{Formal solutions of linear ordinary differential
		equations containing $m$-hypergeometric series}.
	\newblock \bibinfo{journal}{\emph{Programming and Computer Software}}
	\bibinfo{volume}{28}, \bibinfo{number}{2} (\bibinfo{year}{2002}),
	\bibinfo{pages}{92--101}.
	\newblock
	
	
	\bibitem[10]%
	{gfun}
	\bibfield{author}{\bibinfo{person}{Bruno Salvy} {and} \bibinfo{person}{Paul
			Zimmermann}.} \bibinfo{year}{1994}\natexlab{}.
	\newblock \showarticletitle{GFUN: a Maple package for the manipulation of
		generating and Holonomic functions in one variable}.
	\newblock \bibinfo{journal}{\emph{ACM Trans. Math. Software}}
	\bibinfo{volume}{20}, \bibinfo{number}{2} (\bibinfo{year}{1994}),
	\bibinfo{pages}{163--177}.
	\newblock
	
	
	\bibitem[11]%
	{BTphd}
	\bibfield{author}{\bibinfo{person}{Bertrand Teguia~Tabuguia}.}
	\bibinfo{year}{2020}\natexlab{}.
	\newblock \emph{\bibinfo{title}{Power Series Representations of Hypergeometric
			Types and Non-Holonomic Functions in Computer Algebra}}.
	\newblock \bibinfo{thesistype}{Ph.\,D. Dissertation}.
	\bibinfo{school}{University of Kassel,
		\url{https://kobra.uni-kassel.de/handle/123456789/11598}}.
	\newblock
	
	
	\bibitem[12]%
	{tabuguia2021hypergeometric}
	\bibfield{author}{\bibinfo{person}{Bertrand Teguia~Tabuguia} {and}
		\bibinfo{person}{W Koepf}.} \bibinfo{year}{2021}\natexlab{a}.
	\newblock \showarticletitle{Hypergeometric type power series}. In
	\bibinfo{booktitle}{\emph{4th International Conference ``Computer
			Algebra''}}. Dorodnicyn Computing Center, Federal Research Center ``Computer
	Science and Control'' of Russian Academy of Sciences (CCAS) and Peoples
	Friendship University of Russia, \bibinfo{pages}{105--108}.
	\newblock
	
	
	\bibitem[13]%
	{teguia2020power}
	\bibfield{author}{\bibinfo{person}{Bertrand Teguia~Tabuguia} {and}
		\bibinfo{person}{Wolfram Koepf}.} \bibinfo{year}{2021}\natexlab{b}.
	\newblock \showarticletitle{Power series representations of hypergeometric type
		functions}. In \bibinfo{booktitle}{\emph{Maple in Mathematics Education and
			Research. MC 2020}}. Editors: Corless R., Gerhard J., Kotsireas I.,
	\bibinfo{publisher}{Communications in Computer and Information Science,
		Springer}, \bibinfo{pages}{376--393}.
	\newblock
	
	
	\bibitem[14]%
	{teguia2022symbolic}
	\bibfield{author}{\bibinfo{person}{Bertrand Teguia~Tabuguia} {and}
		\bibinfo{person}{Wolfram Koepf}.} \bibinfo{year}{2022}\natexlab{}.
	\newblock \showarticletitle{Symbolic conversion of holonomic functions to
		hypergeometric type power series}.
	\newblock \bibinfo{journal}{\emph{Programming and Computer Software}}
	\bibinfo{volume}{48}, \bibinfo{number}{2} (\bibinfo{year}{2022}),
	\bibinfo{pages}{125--146}.
	\newblock
	
	
	\bibitem[15]%
	{van1999finite}
	\bibfield{author}{\bibinfo{person}{Mark Van~Hoeij}.}
	\bibinfo{year}{1999}\natexlab{}.
	\newblock \showarticletitle{Finite singularities and hypergeometric solutions
		of linear recurrence equations}.
	\newblock \bibinfo{journal}{\emph{J. Pure Appl. Algebra}}
	\bibinfo{volume}{139}, \bibinfo{number}{1-3} (\bibinfo{year}{1999}),
	\bibinfo{pages}{109--131}.
	\newblock
	
\end{thebibliography}
\end{document}